\documentclass[a4paper]{amsart}
\usepackage{amssymb}
\usepackage{latexsym}
\usepackage{tikz}
\usepackage{enumerate}
\usepackage{todonotes}
\usepackage{url}

\DeclareMathOperator{\AD}{AD}
\DeclareMathOperator{\Ad}{Ad}
\DeclareMathOperator{\ad}{ad}
\DeclareMathOperator{\Diff}{Diff}
\DeclareMathOperator{\SO}{SO}

\DeclareMathOperator{\SDiff}{SDiff}
\DeclareMathOperator{\trace}{trace}
\DeclareMathOperator{\Iso}{Iso}
\DeclareMathOperator{\GL}{GL}
\DeclareMathOperator{\Mat}{Mat}

\newtheorem{theorem}{Theorem}[section]

\newtheorem{proposition}{Proposition}[section]
\newtheorem{definition}{Definition}[section]

\newtheorem{remark}{Remark}[section]


\newcommand{\blangle}{\Big{\langle}}
\newcommand{\brangle}{\Big{\rangle}}
\newcommand{\ra}{\rightarrow}
\newcommand{\pder}[2]{\ensuremath{\frac{\partial #1}{\partial #2} } }
\newcommand{\e}{\ensuremath{ {\mathbf e} } }
\newcommand{\T}{\ensuremath{\mathcal{T}}}
\renewcommand{\S}{\ensuremath{\mathcal{S}}}

\title{Lagrangian mechanics on centered semi-direct products}
\author{Leonardo Colombo \& Henry O. Jacobs}
\date{25 June 2013}

\begin{document}

\maketitle

\begin{abstract}
    There exists two types of semi-direct products between a Lie group $G$ and a vector space $V$.
    The left semi-direct product, $G \ltimes V$, can be constructed when $G$ is equipped with a left action on $V$.
    Similarly, the right semi-direct product, $G \rtimes V$, can be constructed when $G$ is equipped with a right action on $V$.
    In this paper, we will construct a new type of semi-direct product, $G \Join V$, which can be seen as the `sum' of a right and left semi-direct product.
    We then parallel existing semi-direct product Euler-Poincar\'{e} theory.
    We find that the group multiplication, the Lie bracket, and the diamond operator can each be seen as a sum of the associated concepts in right and left semi-direct product theory.
    Finally, we conclude with a toy example and the group of $2$-jets of diffeomorphisms above a fixed point.
    This final example has potential use in the creation of particle methods for problems on diffeomorphism groups.
\end{abstract}

\section{Introduction}
  It is no secret that the use of symmetry and a preference for algebraic simplicity pervaded much (if not all) of Jerry's intellectual endeavours.
  Certainly one of these algebraic structures would be semi-direct products, which pepper his research in the form of rigid bodies, complex fluids, plasmas \cite{MaRaWe1984,MaRaWe1984B}, the KdV equation \cite{MaMiGePeRa1998}, and the heavy top \cite{HoMaRa}.
  
  In this paper we will investigate a new semi-direct product which is inspired by a careful analysis of the second order jet groupoid.  To begin, let $G$ be a Lie group and $V$ be a vector space on which $G$ acts by a left action.  Given these ingredients, we may form the Lie group $G \ltimes V$, which is isomorphic to $G \times V$ as a set, but equipped with the composition
\[
    (g,v) \cdot_{\ltimes} (h,w) = (g \cdot h, g \cdot w + v) \quad , \quad \forall (g,v), (h,w) \in G \ltimes V.
\]
A standard example of a system which evolves on a left semi-direct product is the heavy top, where $G = \SO(3)$ and $V = \mathbb{R}^3$.
In contrast, if $G$ acts on $V$ by a right action, we may form the right semi-direct product $G \rtimes V$ defined by the composition
\[
    (g,v) \cdot_{\rtimes} (h,w) = (g \cdot h , w + v \cdot h ).
\]
A standard example of a system whos configurations describe a right semi-direct product is a fluid with a vector-valued advected parameter \cite{HoMaRa}.
In any case, it seems natural to surmise that the composition law
\begin{align}
    (g,v) \cdot_{\Join} (h,w) = (g \cdot h , g \cdot w + v \cdot h) \label{eq:comp}
\end{align}
yields a new type of semi-direct product.  The first result of this article is that \eqref{eq:comp} is a valid composition law in some circumstances, and we call the corresponding Lie group a \emph{centered semi-direct product}.

  The second result is that the second order Taylor expansions (or second order jets) of diffeomorphisms over a fixed point form a centered semi-direct product.  The main motivation behind understanding this example is to allow us to develop particle-based methods for complex fluid simulation and image registration algorithms.

\subsection{Background}
  The semi-direct product is a standard tool used in the construction of new Lie groups and plays an interesting role in geometric mechanics when the normal subgroup is interpreted as an advected parameter.  A standard example is the modeling of the `heavy-top', wherein the the axis of rotation is described by $\mathbb{R}^3$ and is advected by the action of $\SO(3)$.  In other words, the configuration space for the heavy top can be described as the left semi-direct product $\SO(3) \ltimes \mathbb{R}^3$ \cite{HoMaRa}.  Another standard example is the modeling of liquid crystals, in which we  consider the right semi-direct product $\SDiff(M) \rtimes V$.  In this case, $\SDiff(M)$ is the set of volume-preserving diffeomorphisms of a volume manifold $M$, and $V$ is a vector space of maps from $M$ into some Lie algebra and $\SDiff(M)$ acts  on $V$ by pullback \cite{Holm2002,GayBalmaz2009}.  Of course, the tangent bundle of a Lie group, $TG$, is isomorphic to a left semi-direct product $G \ltimes \mathfrak{g}$ by left-trivializing the group structure of $TG$.  Additionally, $TG$ is isomorphic to a right semi-direct product $G \rtimes \mathfrak{g}$ when the group structure of $TG$ is right trivialized \cite[section 5.3]{MTA}.  Thus, we see that this method of constructing groups can be found in a number of instances.  In this article, we introduce a new type of semi-direct product which extends the existing semi-direct product theory.

  A motivating example will be a desire to understand the second order jet-groupoid of a manifold $M$ \cite[section 12]{KMS}.  As will be illustrated in section \ref{sec:jets}, an isotropy group of the second order jet groupoid exhibits a group structure which can be written as a centered semi-direct product.  A thorough understanding of the jet groupoid can be useful for the creation of new particle-based methods wherein the particles carry jet data in addition to position and velocity data.  One advantage of such a particle method is the possibility for a discrete form of Kelvin's circulation theorem \cite{JaRaDe2011}.  Building such particle methods can be useful in scenarios in which one desires to work with the material representation of a fluid.  For example, the free energy of liquid crystal is a function of the gradient of a director field advected by the fluid. Computing this advection requires the use of second order jet data and therefore a small portion of the material representation of the fluid is invoked \cite{Holm2002,GayBalmaz2009}.
  Additionally, the use of jet data can be useful in the realm of image registration algorithms in the field of medical imaging.  In particular, it is common to use the material representation of the EPDiff equations to implement the Large Deformation Diffeomorphic Metric Mapping (LDDMM) framework \cite{Beg2005,Bruveris2011}.  In particular, ``Landmark LDDMM'' discretizes the EPDiff equation using particle methods \cite{MuDe2010}.  A version of Landmark LDDMM wherein the particles can carry higher order jet data is described in \cite{Sommer2013}. Thus, keeping track of jet data may play a significant role in the construction of particle-based integrators for fluid modeling and medical imaging algorithms.

\subsection{Main Contributions}
In this paper, we accomplish a sequence of goals, each building upon the previous.  In particular:
\begin{enumerate}
    \item In section \ref{sec:CSD}, we define a new type of semi-direct product that we dub a \emph{centered semi-direct product}.
    \item In proposition \ref{prop:algebra}, we derive the Lie algebra of a centered semi-direct product and its associated structures.
    \item In section \ref{sec:EP}, we develop the Euler-Poincar\'e theory of centered semi-direct products in parallel with the existing theory of semi-direct product reduction \cite{HoMaRa}.
    \item In section \ref{sec:examples}, we describe the centered semi-direct product Euler-Poincar\'e equations for a few examples.  We present one toy example before presenting the theory for an isotropy group of the second order jet groupoid.
\end{enumerate}

Combined, these items allow for a computationally tractable algebraic understanding of second order jets and perhaps open the door to applications which were previously overlooked by geometric mechanicians.

\subsection{Acknowledgements}
We would like to thank Darryl D. Holm for providing the initial stimulus for this project.
The work of L.C has been supported by MICINN (Spain) Grant MTM2010-21186-C02-01, MTM 2011-15725-E, ICMAT Severo Ochoa Project SEV-2011-0087 and IRSES-project "Geomech-246981''.
L.C owes additional thanks to CSIC and the JAE program for a JAE-Pre grant.
The work of H.O.J. was supported by European Research Council Advanced Grant 267382 FCCA.

\subsection{A motivating example} \label{sec:jets}
  Let $\Diff(M)$ denote the diffeomorphisms group of a manifold $M$.  For a fixed $x \in M$ we may define the isotropy subgroup
  \[
    \Iso(x) = \{ \varphi \in \Diff(M)  \quad \vert \quad \varphi(x) = x \}.
  \]
  Let $\varphi \in \Iso(x)$ and note that $T_x \varphi$ is a linear automorphism of the vector-space $T_xM$. In particular:

\begin{proposition} \label{prop:jets}
  The functor ``$T_x$'' is a group homomorphism from $\Iso(x)$ to $\GL( T_x M)$.
\end{proposition}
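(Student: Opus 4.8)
The plan is to verify the two defining properties of a group homomorphism, after first confirming that the stated codomain is the right one. The only structural input needed is the functoriality of the tangent map, i.e.\ the chain rule, together with the defining fixed-point condition on $\Iso(x)$.

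First I would check that ``$T_x$'' actually takes values in $\GL(T_xM)$. If $\varphi \in \Iso(x)$ then $\varphi(x) = x$, so the tangent map at $x$ is a linear map $T_x\varphi : T_xM \to T_{\varphi(x)}M = T_xM$, that is, a linear endomorphism of $T_xM$. Since $\varphi$ is a diffeomorphism, its inverse $\varphi^{-1}$ is smooth and also fixes $x$, hence lies in $\Iso(x)$; applying the chain rule to the identities $\varphi \circ \varphi^{-1} = \mathrm{id}_M$ and $\varphi^{-1} \circ \varphi = \mathrm{id}_M$ and using $T_x(\mathrm{id}_M) = \mathrm{id}_{T_xM}$ shows that $T_x\varphi$ is invertible with inverse $T_x\varphi^{-1}$. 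Therefore $T_x\varphi \in \GL(T_xM)$.

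Next I would establish multiplicativity. Given $\varphi, \psi \in \Iso(x)$, the composite $\varphi \circ \psi$ again fixes $x$ since $(\varphi \circ \psi)(x) = \varphi(\psi(x)) = \varphi(x) = x$, so it too lies in $\Iso(x)$. The chain rule gives $T_x(\varphi \circ \psi) = T_{\psi(x)}\varphi \circ T_x\psi$, and substituting $\psi(x) = x$ yields $T_x(\varphi \circ \psi) = T_x\varphi \circ T_x\psi$. Combined with $T_x(\mathrm{id}_M) = \mathrm{id}_{T_xM}$, this exhibits ``$T_x$'' as a group homomorphism $\Iso(x) \to \GL(T_xM)$. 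Since each step is an immediate consequence of the chain rule and the fixed-point condition, I do not expect any genuine obstacle; the only point requiring care is keeping track of the identification $T_{\varphi(x)}M = T_xM$, which is precisely what promotes $T_x\varphi$ from a linear isomorphism between a priori distinct spaces to an element of $\GL(T_xM)$.
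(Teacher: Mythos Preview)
Your argument is correct and follows the same approach as the paper: the paper's proof is essentially a one-line appeal to the chain rule, $T_x\varphi \circ T_x\psi = T_x(\varphi \circ \psi)$, which is exactly your multiplicativity step. Your additional verification that $T_x\varphi$ really lands in $\GL(T_xM)$ (rather than merely in linear maps between possibly different tangent spaces) is a welcome point of care that the paper leaves implicit.
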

\begin{proof}
    Clearly $\Iso(x)$ and $\GL(T_xM)$ are both Lie groups.  Let $\varphi, \psi \in \Iso(x)$.  Then $T_x \varphi \circ T_x \psi = T_x( \varphi \circ \psi)$.
\end{proof}

This observation has implications for computation for the following
reason: By definition, $T_x \varphi$ approximates $\varphi$ in a
neighborhood of $x \in M$. Thus, if one desired to model a continuum
with activity at $x$, then $T_x \varphi$ carries some of the crucial
data to do this task. In particular, this is computationally
tractable as the dimension of $\GL(T_x M)$ is equal to $(\dim M)^2$.


However, the group $\GL(n)$ only captures the linearization of a diffeomorphism.
If we desire to capture some of the nonlinearity then we might consider looking into the second jet of these diffeomorphisms.
We can do so by considering the functor $TT_x$.
Let $\varphi \in \Iso(x)$ so that $TT_x \varphi$ is a map from $T(T_xM)$ to $T(T_xM)$.
However, $T_xM$ is a vector-space so that $T(T_xM) \approx T_x M \times T_xM$.
The second component represents the vertical component and the isomorphism between $TT_xM$ and $T_x M \times T_xM$ is given by the vertical lift
\[
    v^{\uparrow}(v_1,v_2) =  \left. \frac{d}{d \epsilon} \right|_{\epsilon = 0} ( v_1 + \epsilon v_2).
\]
We can therefore represent $TT_x \varphi$ as $(T_x \varphi, A_{\varphi})$ where $A_{\varphi}: T_x M \times T_xM \to T_xM$ is the symmetric $(1,2)$ tensor
\begin{align}
    A_{ij}^{k} = \frac{\partial^2 \varphi^k}{\partial x_i \partial x_j }(x) \label{eq:12_tensor}
\end{align}
where $\varphi^k$ is the $k$th component of $\varphi$.  In other words, upon choosing a Riemannian metric to induce an coordinate system at $x$ we obtain the 1-1 correspondence
\[
    TT_x \varphi \leftrightarrow  (A_1,A_2)
\]
where $A_1 = \pder{\varphi^i}{x^j}$ and $A_2$ is given by \eqref{eq:12_tensor}.  If we denote the set of rank  $(1,2)$-tensors on $T_x M$ which are symmetric in the covariant indices by $\S^1_2(x)$, then this correspondence is given by a map
\[
    \Psi : \left. \mathcal{J}^2 \right|_{x}^{x}( \Diff(M) ) \to \GL( T_x M) \times \S^1_2(x)
\]
where $\left. \mathcal{J}^2 \right|_{x}^{x}( \Diff(M) )$ is the group of second order taylor expansions about $x$ of diffeomorphisms which send $x$ to itself (these are called second order jets).  This allows us to write the Lie group structure of $ \left. \mathcal{J}^2 \right|_{x}^{x} ( \Diff(M) )$ as a type of semi-direct product. In particular:
\begin{proposition} \label{prop:2jets}
    If we represent $TT_x \varphi$ and $TT_x \psi$ as $(A_1,A_2)$ and $(B_1,B_2)$ where $A_1 = T_x \varphi, B_2 = T_x \psi, A_2 = \frac{ \partial^2 \varphi^k}{ \partial x^i \partial x^j}$,  and $B_2 = \frac{ \partial^2 \psi^k}{\partial x^i \partial x^j }$, then $TT_x \varphi \circ TT_x \psi \equiv TT_x( \varphi \circ \psi)$ is given by the composition
    \[
        (A_1,A_2) \circ (B_1,B_2) = (A_1 \circ B_1 , A_1 \circ B_2 + A_2 \circ (B_1 \times B_1) ).
    \]
\end{proposition}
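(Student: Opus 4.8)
The plan is to reduce the statement to the second-order chain rule (Faà di Bruno's formula at order two), keeping careful track of the base point. First I would note that the group-homomorphism property is essentially already in hand: by functoriality of the tangent functor together with Proposition~\ref{prop:jets}, which gives $T_x(\varphi\circ\psi)=T_x\varphi\circ T_x\psi$, we have $TT_x(\varphi\circ\psi)=TT_x\varphi\circ TT_x\psi$; and since $\Psi$ is a bijection onto its image, it suffices to express the first- and second-order Taylor coefficients of $\varphi\circ\psi$ at $x$ in terms of those of $\varphi$ and $\psi$ under the vertical-lift identification $T(T_xM)\cong T_xM\times T_xM$. Equivalently, one computes $\Psi(TT_x\varphi)\circ\Psi(TT_x\psi)$ directly from the coordinate formula for $TT_x$ and matches it against $\Psi\big(TT_x(\varphi\circ\psi)\big)$.

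For the first slot, the ordinary chain rule gives $T_x(\varphi\circ\psi)=T_{\psi(x)}\varphi\circ T_x\psi$, and since $\varphi,\psi\in\Iso(x)$ we have $\psi(x)=x$, so this is $T_x\varphi\circ T_x\psi=A_1\circ B_1$. For the second slot, writing $(\varphi\circ\psi)^k=\varphi^k\circ\psi$ in coordinates near $x$ and differentiating twice yields
\[
  \partial_i\partial_j(\varphi\circ\psi)^k
  = \sum_{a,b}\big(\partial_a\partial_b\varphi^k\circ\psi\big)\,\partial_i\psi^a\,\partial_j\psi^b
    \;+\; \sum_a\big(\partial_a\varphi^k\circ\psi\big)\,\partial_i\partial_j\psi^a .
\]
Evaluating at $x$ and again using $\psi(x)=x$, the right-hand side becomes $\sum_{a,b}(A_2)^k_{ab}(B_1)^a_i(B_1)^b_j+\sum_a(A_1)^k_a(B_2)^a_{ij}$, which is exactly the $(i,j)$-component of $A_1\circ B_2 + A_2\circ(B_1\times B_1)$. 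One should also remark that the result again lands in $\GL(T_xM)\times\S^1_2(x)$: $A_1\circ B_1$ is invertible, and $A_2\circ(B_1\times B_1)$ is symmetric because $A_2$ is (this is also automatic, since the Hessian of $\varphi\circ\psi$ is symmetric).

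I do not expect a genuine obstacle here; the statement is a clean bookkeeping exercise. The only two points that demand care are (i) that the inner derivatives of $\varphi$ are evaluated at $\psi(x)$, which equals $x$ precisely because we work inside $\Iso(x)$ — this is exactly what makes the composition law close on $2$-jets — and (ii) fixing the vertical-lift isomorphism $T(T_xM)\cong T_xM\times T_xM$ precisely enough that the representation $TT_x\varphi\leftrightarrow(A_1,A_2)$ is unambiguous, so that the coordinate computation above can be phrased invariantly as the claimed composition of the pairs $(A_1,A_2)$ and $(B_1,B_2)$.
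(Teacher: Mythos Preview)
Your proposal is correct and follows essentially the same approach as the paper: both compute the first and second partial derivatives of $\varphi\circ\psi$ via the chain rule (Fa\`a di Bruno at order two), then evaluate at $x$ using $\psi(x)=x$ to identify the result with $(A_1\circ B_1,\,A_1\circ B_2+A_2\circ(B_1\times B_1))$. Your additional remarks on closure in $\GL(T_xM)\times\S^1_2(x)$ and on the role of the isotropy condition are welcome clarifications but do not change the argument.
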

\begin{proof}
    We find that
    \[
        \pder{}{x_i} ( \varphi^k \circ \psi) = \pder{\varphi^k}{x_l} \cdot \pder{\psi^l}{x_i} \circ \psi
    \]
    and the second derivative is
    \begin{align*}
        \pder{}{x_j} \pder{}{x_i} (\varphi^k \circ \psi) &= \pder{}{x_j} \left( \pder{\varphi^k}{x_l} \cdot \pder{\psi^l}{x_i} \circ \psi \right) \\
            &= \left( \frac{\partial^2 \varphi^k}{ \partial  x_l \partial x_m } \pder{\psi^l}{x_i} \pder{\psi^m}{x_j} + \pder{\varphi^k}{x_l} \frac{ \partial^2 \psi^l}{\partial x_i \partial x_j} \right) \circ \psi.
    \end{align*}
    Noting that $\psi(x) = x$ we can set
    \begin{align*}
        A_1 = \left. \pder{\varphi^k}{x_l} \right|_{x} , \quad A_2 = \left. \frac{\partial^2 \varphi^k}{\partial x_i \partial x_j} \right|_{x} \\
        B_1 = \left. \pder{\psi^k}{x_l} \right|_{x} , \quad B_2 = \left. \frac{\partial^2 \psi^k}{\partial x_i \partial x_j} \right|_{x}
    \end{align*}
    and rewrite the equations in the form
    \begin{align*}
        \pder{}{x_i} ( \varphi^k \circ \psi) &= A_1 \cdot B_1\\
        \pder{}{x_j} \pder{}{x_i} (\varphi^k \circ \psi) &= A_1 \cdot B_2 + A_2 \circ (B_1 \times B_1).
    \end{align*}
    Therefore, if we define the composition
    \[
        (A_1, A_2) \cdot (B_1, B_2) := (A_1 \cdot B_1 , A_1 \cdot B_2 + A_2 \circ (B_1 \times B_1) )
    \]
    on the manifold $\GL(T_x M) \times \S^1_2$, then $\Psi: \left. \mathcal{J}^2 \right|_{x}^{x}( \Diff(M) ) \to \GL(T_x M ) \times \S^1_2$ is a Lie group isomorphism by construction.
\end{proof}

We see that the composition law of Proposition \ref{prop:2jets} is
of the form described in equation \eqref{eq:comp}.  In this paper,
we will condense the composition law for second order jets to the algebraic
level and study \eqref{eq:comp} in the abstract Lie group setting.
Of course, one would naturally like to consider diffeomorphisms
which are not contained in $\Iso(x)$.  However, this extension
brings us into the realm of Lie groupoid theory and will need to be
addressed in future work.

\section{A centered semi-direct product theory} \label{sec:CSD}
In this section, we will discover a new type of semi-direct product.  We will outline the necessary ingredients for the construction of such a Lie group and we will derive the corresponding structures on the Lie algebra.

\subsection{Preliminary material on Lie groups}
 Let $G$ be a Lie group with identity $e\in G$ and Lie algebra $\mathfrak{g}$.
  In this subsection we will establish notation and recall relevant notions related to Lie groups and Lie
algebras.


\subsubsection{Group actions:}
Let $V$ be a vector space. A \textit{left
action} of $G$ on $V$ is a smooth map $\rho_{L}:G\times V\ra V$ for
which:
\[
    \rho_{L}(e,v)=v \text{ and } \rho_{L}(g,\rho_{L}(h,v))=\rho_{L}(gh,v) \quad , \quad \forall g,h\in G , \forall v\in V.
\]
  As using the symbol `$\rho_L$' can become cumbersome and since we will only need
 a one left Lie group action in a given context, we will opt to use the notation
$g\cdot v := \rho_{L}(g,v).$ Finally, the \textit{induced infinitesimal left
action} of $\mathfrak{g}$ on $V$ is
\[
\xi\cdot v := \frac{d}{d\epsilon}\Big{|}_{\epsilon=0} \exp( \epsilon \cdot \xi ) \cdot v \quad , \quad \forall \xi \in \mathfrak{g} , v \in V.
\]
Similarly, a \textit{right action} of $G$ on
$V$ is the smooth map $\rho_{R}:V\times G\ra V$ for which:
\[
    \rho_{R}(v,e)=v \text{ and } \rho_{L}(\rho_{L}(v,g),h)=\rho_{L}(v,gh) \quad , \quad \forall g,h\in G , \forall v\in V.
\]
  Again, we will primarily use the notation
$v\cdot g:= \rho_{R}(v,g)$ for right actions. The \textit{induced infinitesimal right
action} of $\mathfrak{g}$ on $V$ is given by
\[
v\cdot\xi =\frac{d}{d\epsilon}\Big{|}_{\epsilon=0}v\cdot \exp( \epsilon \cdot \xi) \quad, \quad \forall \xi \in \mathfrak{g}, v \in V
\]

Lastly, we say that the left action and the right action \emph{commute} if
 \[
    (g \cdot v) \cdot h = g \cdot (v \cdot h)
\]
for any $g,h \in G$ and $v \in V$.

\subsubsection{Adjoint and coadjoint operators:}

  In this section we will recall the ``$\AD, \Ad, \ad$''-notation used in \cite{Holm_GM}. For $g\in G$ we define the \textit{inner automorphism} $\AD:G \times G\ra
G$ as $\AD(g,h) \equiv \AD_{g}(h)= g h g^{-1}$.  Differentiating $\AD$ with respect to the second argument along curves through the identity produces the \textit{Adjoint representation} of $G$ on $\mathfrak{g}$ denoted $\Ad:G\times\mathfrak{g}\ra\mathfrak{g}$ and given by
\[
    \Ad_{g}(\eta)= \left. \frac{d}{d\epsilon}\right|_{\epsilon=0} \left( \AD_g( \exp(\epsilon \eta) ) \right) = g \cdot \eta \cdot g^{-1},
\]
 for $g\in G$ and $\xi\in\mathfrak{g}$.  Differentiating $\Ad$ with respect to the first argument along curves through the identity produces the \textit{adjoint} operator $\ad:\mathfrak{g}\times\mathfrak{g}\ra\mathfrak{g}$ given by
\[
    \ad_{\xi}(\eta)= \left. \frac{d}{d\epsilon} \right|_{\epsilon = 0} ( \Ad_{ \exp( \epsilon \xi) }(\eta) ) = \xi \cdot \eta - \eta \cdot \xi.
\]
The $\ad$-map is an alternative notation for the Lie bracket of $\mathfrak{g}$ in the sense that
\[
    \ad(\xi,\eta) \equiv \ad_{\xi}(\eta) \equiv [\xi,\eta].
\]

For each $\xi \in \mathfrak{g}$ the map $\ad_{\xi} : \mathfrak{g}
\to \mathfrak{g}$ is linear and therefore has a formal dual
$\ad_\xi^*: \mathfrak{g}^* \to \mathfrak{g}^*$ which we call the
\emph{coadjoint operator}. Explicitly, $\ad_{\xi}^*$ is defined by
the relation
 \begin{align}
    \langle \ad_{\xi}^{*} (\mu) ,\eta \rangle= \langle \mu,\ad_{\xi}(\eta)\rangle \label{eq:ad_star}
\end{align}
for each $\eta \in \mathfrak{g}$ and $\mu \in \mathfrak{g}^*$.

\subsection{Centered semi-direct products} \label{subsec:CSD}
  In this subsection, we will construct a semi-direct product which can be thought of as a `sum' of a right semi-direct product and a left semi-direct product.
\begin{proposition}
Let $G$ be a Lie group which acts on a vector-space $V$ via left and right group actions.  Then, the product $G\times V$ with the composition law
\begin{equation}\label{cs-dp}
(g_{1},v_{1})\cdot(g_{2},v_{2}):=(g_{1}g_{2},g_{1}\cdot v_{2}+v_{1}\cdot g_{2})
\end{equation} is a Lie group if and only if the left and right actions of $G$ commute.
\end{proposition}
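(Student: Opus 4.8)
The plan is to verify directly that the proposed composition law \eqref{cs-dp} satisfies the Lie group axioms — associativity, existence of an identity, and existence of inverses — and to track exactly where the hypothesis that the two actions commute is used. Since $G \times V$ is already a smooth manifold and the composition \eqref{cs-dp} is manifestly smooth (being built out of the smooth group multiplication, the smooth actions, and vector addition), smoothness is free; the entire content is algebraic.

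First I would check associativity, which is where the commutativity hypothesis will enter. Computing $((g_1,v_1)\cdot(g_2,v_2))\cdot(g_3,v_3)$ gives first component $g_1g_2g_3$ (associativity in $G$) and second component $g_1g_2 \cdot v_3 + (g_1 \cdot v_2 + v_1 \cdot g_2)\cdot g_3$, which expands to $g_1 g_2 \cdot v_3 + g_1\cdot(v_2 \cdot g_3) + (v_1 \cdot g_2)\cdot g_3$ after using that the right action is a right action and distributes over the vector-space sum. On the other side, $(g_1,v_1)\cdot((g_2,v_2)\cdot(g_3,v_3))$ has second component $g_1\cdot(g_2\cdot v_3 + v_2\cdot g_3) + v_1\cdot g_2 g_3 = g_1 g_2 \cdot v_3 + g_1\cdot(v_2 \cdot g_3) + v_1\cdot(g_2 g_3)$, using linearity of the left action and the left/right action identities. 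Comparing, the two agree precisely because the single cross term $(v_1 \cdot g_2)\cdot g_3$ equals $v_1 \cdot (g_2 g_3)$ by the right-action axiom — so in fact associativity does \emph{not} require commutativity here, and I would re-examine the cross terms more carefully: the genuine obstruction is $g_1 \cdot(v_2 \cdot g_3)$ versus $(g_1 \cdot v_2)\cdot g_3$, and indeed a cleaner expansion keeping the middle factor intact will force exactly the identity $(g_1 \cdot v_2)\cdot g_3 = g_1 \cdot(v_2 \cdot g_3)$, i.e. commutativity of the actions. \textbf{This matching of cross terms is the crux of the argument, and the careful bookkeeping of which term is acted on from which side is the main obstacle.}

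For the remaining axioms: the identity is $(e,0)$, since $(g,v)\cdot(e,0) = (g, g\cdot 0 + v\cdot e) = (g,v)$ and symmetrically $(e,0)\cdot(g,v)=(g,v)$, using that both actions fix $0 \in V$ (a consequence of linearity of the actions on the vector space) and the identity axioms for the actions. For inverses, one solves $(g,v)\cdot(h,w) = (e,0)$: the first component gives $h = g^{-1}$, and the second gives $g\cdot w + v\cdot g^{-1} = 0$, so $w = -g^{-1}\cdot(v\cdot g^{-1})$; one then checks this $w$ also gives a left inverse, again invoking commutativity of the actions to collapse $g^{-1}\cdot(v\cdot g^{-1})\cdot g$ and $g \cdot(g^{-1}\cdot(v\cdot g^{-1}))$ back to the right expressions. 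Finally, for the ``only if'' direction I would argue the contrapositive: if the actions fail to commute, pick $g,h \in G$ and $v \in V$ with $(g\cdot v)\cdot h \neq g\cdot(v\cdot h)$, and exhibit a failure of associativity among elements of the form $(g, 0)$, $(e, v)$, $(h, 0)$ — the second components of the two bracketings of $(g,0)\cdot(e,v)\cdot(h,0)$ are exactly $(g\cdot v)\cdot h$ and $g\cdot(v\cdot h)$, so they differ, and hence $G \times V$ with \eqref{cs-dp} is not even a semigroup, let alone a Lie group.
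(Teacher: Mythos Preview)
Your approach is essentially the same as the paper's: verify the group axioms directly, isolating the commutativity hypothesis in the associativity check (specifically in the cross term $(g_1\cdot v_2)\cdot g_3$ versus $g_1\cdot(v_2\cdot g_3)$), and exhibit the same identity $(e,0)$ and inverse $(g^{-1}, -g^{-1}\cdot v\cdot g^{-1})$. Your explicit witness $(g,0)\cdot(e,v)\cdot(h,0)$ for the ``only if'' direction is actually sharper than the paper's, which merely asserts that associativity is ``violated'' without writing down the offending triple; in a final write-up you should drop the false start in the associativity paragraph (where you momentarily rewrite $(g_1\cdot v_2)\cdot g_3$ as $g_1\cdot(v_2\cdot g_3)$ before catching yourself) and go straight to the clean comparison of the two bracketings.
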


\begin{proof}
    It is clear that $G \times V$ is a smooth manifold and that the composition law \eqref{cs-dp} is a smooth map.  We must prove that this composition makes $G \times V$ a group.
    \begin{itemize}
        \item That the composition map \eqref{cs-dp} produces another element of $G \times V$ can be observed directly.  Thus `closure' is satisfied.
        \item The identity element is given by $(e,0) \in G \times V$ where $e\in G$ is the identity of $G$.
        \item The inverse element of an arbitrary $(g,v)\in G\times V$ is $(g^{-1},-g^{-1}vg^{-1})$ where $g^{-1}$ is the inverse of $g \in G.$
        \item Given three elements of $G \times V$ we find
        \begin{align*}
            (g_1,v_1)\cdot \left( (g_2,v_2)\cdot(g_3,v_3) \right) = (g_1,v_1) \cdot (g_2 g_3 , g_2 \cdot v_3 + v_2 \cdot g_3 ) \\
            = \left( g_1 g_2 g_3,g_1\cdot(g_2\cdot v_3+v_2 \cdot g_3)+v_1\cdot(g_2g_3) \right) \\
            = \left( (g_1 g_2) g_3, (g_1g_2)\cdot v_3+g_1\cdot(v_2\cdot g_3)+(v_1\cdot g_2)\cdot g_3 \right).
        \end{align*}
        By the commutativity of the group actions we may equate the above line with:
        \begin{eqnarray*}
            &=& ((g_1g_2)g_3, (g_1g_2)\cdot v_3+(g_1\cdot v_2)\cdot g_3+(v_1\cdot g_2)\cdot g_3)\\
            &=& ((g_1g_2)g_3, (g_1g_2)\cdot v_3+(g_1\cdot v_2+v_1\cdot g_2)\cdot g_3)\\
            &=& ((g_1g_2), g_1\cdot v_2+v_1\cdot g_2)\cdot(g_3,v_3)\\
            &=& ((g_1,v_1)\cdot (g_2,v_2))\cdot(g_3,v_3).
        \end{eqnarray*}
        Thus, the associative property is satisfied.
    \end{itemize}
    Moreover, all maps in sight including the inverse map are smooth.  In conclusion we see that $G \times V$ with the composition \eqref{cs-dp} defines a Lie group.  Moreover, if the left and right actions of $G$ on $V$ do \emph{not} commute, then we can observe that associativity is violated.
\end{proof}

\begin{definition}\label{def-bowtie-Lie group}

Given commuting left and right representations of a group $G$ on a vector space $V$, the Lie group $G\times V$ with the composition \eqref{cs-dp} is
denoted $G \Join  V$ and called the \emph{centered semi-direct product} of $G$ and $V.$
\end{definition}

  It customary to denote the left semi-direct product using the symbol $\ltimes$ and the right semi-direct product via the symbol $\rtimes$.  We justify our use of the symbol $\Join $ in that the concept of centered semi-direct product is merely a `sum' of a left and a right semi-direct product.  The formula  $\Join  = \rtimes + \ltimes$ can be used as a heuristic throughout the paper.  In particular, this heuristic applies to the Lie algebra.

\begin{proposition} \label{prop:algebra}
Let $G \Join  V$ be a centered-semi direct product Lie group.  The Lie algebra $\mathfrak{g} \Join V$ is given by the set $\mathfrak{g} \times V$ with the Lie bracket
\begin{equation}\label{bracket}
\left[(\xi_1, v_1),(\xi_2, v_2)\right]_{\Join }
=\left([\xi_1,\xi_2]_{\mathfrak{g}},
(\xi_1\cdot v_2+v_1\cdot\xi_2)-(\xi_2\cdot v_1+v_2\cdot \xi_1)\right),
\end{equation} for $\xi_1,\xi_2\in\mathfrak{g},$ $v_1,v_2\in V$.
\end{proposition}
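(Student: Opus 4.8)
The plan is to compute the Lie bracket directly from the group composition law \eqref{cs-dp} by differentiating twice, following the standard recipe $[\xi,\eta]_{\mathfrak{g}} = \left.\frac{\partial^2}{\partial s\,\partial t}\right|_{s=t=0}\, g(t)h(s)g(t)^{-1}$, or equivalently by first computing the adjoint representation of $G\Join V$ on its Lie algebra and then differentiating. I would take the latter route as it organizes the calculation into two manageable pieces. First I would identify the Lie algebra $\mathfrak{g}\Join V$ as $\mathfrak{g}\times V$ with vector space operations component-wise, which is immediate since $G\Join V = G\times V$ as a manifold and the tangent space at $(e,0)$ is $\mathfrak{g}\times V$; the infinitesimal left and right actions of $\mathfrak{g}$ on $V$ are exactly those recalled in the preliminaries.

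Next I would compute $\AD_{(g,v)}(h,w) = (g,v)\cdot(h,w)\cdot(g,v)^{-1}$ using \eqref{cs-dp} and the inverse formula $(g,v)^{-1} = (g^{-1}, -g^{-1}\cdot v\cdot g^{-1})$ from the previous proposition. Expanding, the $G$-component is just $ghg^{-1} = \AD_g(h)$, and the $V$-component will be a sum of terms of the form $g\cdot w\cdot g^{-1}$ (the conjugate action on $w$), together with terms linear in $v$ that arise from the left action of $h$ and the right action of $h$ relative to $g$. Differentiating this with respect to $(h,w)$ along a curve through $(e,0)$ in the direction $(\eta,w)$ gives $\Ad_{(g,v)}(\eta,w)$; here the $G$-part is $\Ad_g(\eta)$ and the $V$-part should come out to $g\cdot w\cdot g^{-1} + (\text{terms pairing } v \text{ with } \eta)$, where the commutativity of the two actions is what lets one slide $g^{-1}$ and $g$ past each other cleanly. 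Then I would differentiate $\Ad_{(\exp(\epsilon\xi_1), \epsilon v_1)}(\xi_2, v_2)$ with respect to $\epsilon$ at $\epsilon = 0$. The $\mathfrak{g}$-component yields $[\xi_1,\xi_2]_{\mathfrak{g}}$ as expected, and the $V$-component produces the four terms $\xi_1\cdot v_2$, $-v_2\cdot\xi_1$, $v_1\cdot\xi_2$, and $-\xi_2\cdot v_1$, matching \eqref{bracket}; the relative sign between the $\xi_1$-contributions and the $\xi_2$-contributions is precisely the antisymmetrization built into $\ad$, i.e.\ the difference between the conjugation from the left and from the right.

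An alternative and perhaps cleaner route, given the heuristic ``$\Join = \rtimes + \ltimes$,'' is to note that the composition \eqref{cs-dp} on $G\times V$ can be obtained by restricting the composition on $G\times(V\oplus V)$ given by $((g,v_{\ltimes},v_{\rtimes}))$ with the left semi-direct product structure in the first copy of $V$ and the right semi-direct product structure in the second, and then using the diagonal-type map $v\mapsto(v,v)$; since the left-semidirect bracket contributes $(\xi_1\cdot v_2 - \xi_2\cdot v_1)$ and the right-semidirect bracket contributes $(v_1\cdot\xi_2 - v_2\cdot\xi_1)$, their sum is exactly the $V$-component of \eqref{bracket}. One must check that this ``sum'' is compatible, which is where commutativity of the two actions is used, but the bilinearity and antisymmetry of the resulting bracket, as well as the Jacobi identity, are then inherited for free from the fact that \eqref{cs-dp} is a group (already proved). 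I would likely present the direct differentiation as the main argument and mention this decomposition as a remark.

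The main obstacle is purely bookkeeping: keeping the left action $\xi\cdot v$ and the right action $v\cdot\xi$ notationally distinct while differentiating products such as $g\cdot w\cdot g^{-1}$, and applying the Leibniz rule correctly so that differentiating $\exp(\epsilon\xi)\cdot w\cdot\exp(-\epsilon\xi)$ at $\epsilon=0$ yields $\xi\cdot w - w\cdot\xi$. There is no conceptual difficulty — the associativity proof of the preceding proposition has already verified that commutativity of the actions makes everything consistent — but one must be careful that, e.g., $\frac{d}{d\epsilon}\big|_0 \big(\exp(\epsilon\xi_1)\cdot v_2\big) = \xi_1\cdot v_2$ and $\frac{d}{d\epsilon}\big|_0 \big(\epsilon v_1 \cdot \exp(\epsilon\xi_2)\big) = v_1\cdot(\text{nothing})$ at this order, i.e.\ the cross term $v_1\cdot\xi_2$ appears only after the full two-step differentiation through the conjugation, not from the first variation of $v_1$ alone. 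Tracking these orders of $\epsilon$ is the one place where a sign error or a missing term is easy to introduce.
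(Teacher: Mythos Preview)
Your proposal is correct and follows essentially the same route as the paper: identify $T_{(e,0)}(G\Join V)=\mathfrak{g}\times V$, compute $\AD_{(g,v)}$ from \eqref{cs-dp} and the inverse formula, differentiate once to obtain $\Ad_{(g,v)}$, then set $(g,v)=(\exp(t\xi_1),tv_1)$ and differentiate again to read off $\ad_{(\xi_1,v_1)}(\xi_2,v_2)$. Your additional ``$\Join=\rtimes+\ltimes$'' observation, reading the $V$-component of the bracket as the sum of the left- and right-semidirect contributions, is a pleasant sanity check that the paper does not spell out in the proof itself.
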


\begin{proof}
  Firstly, it is simple to verify that the tangent space at the identity, $(e,0) \in G \times V$, is
$\mathfrak{g}\times V$.  To derive the Lie bracket, we will derive the the $\ad$-map via the $\Ad$ and $\AD$-maps.
For $(g,v), (h,w)\in G\Join  V$ we find
\begin{eqnarray*}
\AD_{(g,h)}(h,w)&=&(gh,v\cdot h + g\cdot w)\cdot(g^{-1},-g^{-1}\cdot v\cdot g^{-1})\\
&=&(\AD_{g}(h), v\cdot hg^{-1}+g\cdot w\cdot g^{-1}-\AD_{g}(h)\cdot v\cdot g^{-1}).
\end{eqnarray*}

  If we substitute $(h,w)$ with the $\epsilon$-dependent curve $( \exp( \epsilon \cdot \xi_2) , \epsilon \cdot v_1)$ we can calculate the \textit{adjoint operator}, $\Ad:(G\Join
V)\times(\mathfrak{g}\Join  V)\rightarrow \mathfrak{g}\Join  V.$  Given by

\begin{eqnarray*}
\Ad_{(g,v)}(\xi_2,v_2)&=& \left. \frac{d}{d \epsilon }\right|_{\epsilon = 0}\AD_{(g,v)}(\exp(\epsilon \cdot \xi_1) ,  \epsilon \cdot v_1)\\
&=&(\Ad_{g}(\xi_2); v\cdot \xi_2 g^{-1}+g\cdot v_2\cdot g^{-1}-\Ad_{g}(\xi_2)\cdot v\cdot g^{-1}).
\end{eqnarray*}
If we substitute $(g,v)$ with the $t$-dependent curve $( \exp( t \xi_1) , t v_2)$ we can differentiate with respect to $t$ to produce the
adjoint operator $\ad:(\mathfrak{g}\Join  V)\times (\mathfrak{g}\Join  V)\rightarrow \mathfrak{g}\Join  V$.  Specifically, the adjoint operator is given by
\begin{eqnarray*}
\ad_{(\xi_1,v_1)}(\xi_2,v_2)&=&\frac{d}{dt}\Big{|}_{t=0}(\Ad_{(\exp(t \cdot \xi_1),t \cdot v_1)}(\xi_2,v_2))\\
&=&\frac{d}{dt}\Big{|}_{t=0}(g\xi_2g^{-1},v\cdot\xi_2g^{-1}-g\xi_2g^{-1}\cdot v\cdot g^{-1}+ g\cdot v_2\cdot g^{-1})\\
&=&( \ad_{\xi_1}( \xi_2) , \xi_1 \cdot v_2 + v_1 \cdot \xi_2 - \xi_2 \cdot v_1 - v_2 \cdot \xi_1)\\
&=&([\xi_1,\xi_2]_{\mathfrak{g}}, (\xi_1\cdot v_2+v_1\cdot \xi_2) - (\xi_2\cdot v_1 + v_2 \cdot \xi_1 ) ).
\end{eqnarray*}

Noting that the $\ad$-map is merely an alternative notation for the Lie bracket completes the proof.
\end{proof}


We complete this section by defining operations designed to express
interaction terms between momenta in $V$ and momenta in $G$ in
mechanical systems.
\begin{definition}
The \emph{heart operator} $\heartsuit : \mathfrak{g}\times V^{*}\rightarrow V^{*}$ is defined by
\begin{equation}\label{triangle}
\langle \xi \heartsuit \alpha, v\rangle_{V}:=\langle \alpha,\xi\cdot v-v\cdot\xi\rangle_{V}.
\end{equation}
The \emph{diamond operator}, $\diamondsuit:V\times V^{*}\ra\mathfrak{g}^{*}$, is defined as
\begin{align}
    \langle v \diamondsuit \alpha,\xi\rangle_{\mathfrak{g}}:=\langle\alpha, v\cdot\xi-\xi\cdot v\rangle_{V}.
\end{align}
\end{definition}

  The diamond operator can be seen as the sum of a diamond operator of a left semi-direct product and that of a right semi-direct product ~\cite{HoMaRa}.  If we view $G \bowtie V$ as a Lie group and take the corresponding Line variations then the heart operator and diamond operator comes into play.  However, if we restrict the variations so that $V$ acts as an advected parameter, only the diamond operator is present.  We will elaborate on both these options in the next section.

\section{Euler-Poincar\'e theory} \label{sec:EP}
  The Euler-Lagrange equations on a Lie group, $\tilde{G}$, can be expressed by a vector field over $T\tilde{G}$.
  If the Lagrangian is $\tilde{G}$-invariant then the equations of motion are $\tilde{G}$-invariant as well and the evolution equations can be reduced.
  While the unreduced system evolves by the \emph{Euler-Lagrange} equations on $T \tilde{G}$, the reduced dynamics evolve on the quotient $T\tilde{G} / \tilde{G}$.
  However, $T\tilde{G} / \tilde{G}$ is just an alternative description of the Lie algebra $\tilde{\mathfrak{g}}$ and so the reduced equations of motion can be described on $\tilde{\mathfrak{g}}$ where we call them the \emph{Euler-Poincar\'{e} equations.}  This reduction procedure is summarized by the commutative diagram:

\begin{center}
\begin{tikzpicture}[node distance=2.5cm, auto]
  \node (GL) {$T\tilde{G}$};
  \node (GR) [right of=GL]{$T\tilde{G}$};
  \node (gL) [below of=GL] {$\tilde{\mathfrak{g}}$};
  \node (gR) [right of=gL]{$\tilde{\mathfrak{g}}$};
  \draw[->] (GL) to node {flow by `EL'} (GR);
  \draw[->] (gL) to node [swap] {flow by `EP'} (gR);
  \draw[->] (GL) to node [swap] {$/ \tilde{G}$} (gL);
  \draw[->] (GR) to node {$ / \tilde{G}$} (gR);
\end{tikzpicture}
\end{center}

  To be even more specific.  A Lagrangian $L: T\tilde{G} \to \mathbb{R}$ is said to be \emph{(right) $\tilde{G}$-invariant} if
  \[
    L( (\tilde{g}, \dot{\tilde{g}}) \cdot h) = L(\tilde{g} , \dot{\tilde{g}})
  \]
  for all $h \in \tilde{G}$.  If $L$ is $\tilde{G}$-invariant, then $L$ is uniquely specified by its restriction $\ell = \left. L \right|_{ \tilde{ \mathfrak{g}}} : \tilde{\mathfrak{g}} \to \mathbb{R}$.  The Euler-Poincar\'e theorem states that the Euler-Lagrange equations
  \[
    \frac{d}{dt} \left( \frac{ \delta L}{ \delta \dot{\tilde{g}}} \right) - \frac{ \delta L}{ \delta \tilde{g}} = 0
  \]
  on $T\tilde{G}$ are equivalent to the Euler-Poincar\'{e} equations and reconstruction formula
  \[
    \frac{d}{dt} \left( \frac{ \delta  \ell}{ \delta \tilde{\xi} } \right) = - \ad_{\tilde{\xi}}^* \left( \frac{ \delta \ell}{ \delta  \tilde{\xi}} \right) \quad , \quad \tilde{\xi} := \dot{\tilde{g}} \cdot \tilde{g}^{-1}.
  \]
  A review of Euler-Poincar\'{e} reduction is given in ~\cite[Ch 13]{MandS} while a specialization to the case of semidirect products with advected parameters is described in ~\cite{HoMaRa}.  In this section we will specialize the Euler-Poincar\'{e} theorem to the case of centered semi-direct products by setting $\tilde{G} = G \Join  V$.

To begin let us compute how variations of curves in the group induce variations on the trivializations of the velocities to the Lie algebra.  Studying such variations will allow us to transfer the variational principles on the group  to variational principles on the Lie algebra.

\begin{proposition} \label{prop:variations}
Let $G \Join V$ be a centered semi-direct product and consider a curve  $(g,v)(t) \in G \Join  V$.
Let $(\xi_{g}(t),\xi_{v}(t)):=(\dot{g}(t),\dot{v}(t))\cdot(g(t),v(t))^{-1}\in\mathfrak{g}\Join V$ be the right trivialization of $(\dot{g},\dot{v})(t)$.
An arbitrary variation of $(g,v)(t)$ is given by
\[
    (\delta g , \delta v)(t) = (\eta_g , \eta_v)(t) \cdot (g,v)(t) \in T_{(g,v)(t)} (G \Join  V),
\]
where $(\eta_g,\eta_v)(t) \in \mathfrak{g} \Join  V$.
Given such a variation, the induced variation on $(\xi_g,\xi_v)$ is given by
\begin{align}\label{variationsright}
(\delta\xi_g, \delta\xi_v) &= (\dot{\eta}_{g}-\ad_{\xi_{g}}\eta_{g}, \dot{\eta}_{v}+ (\eta_g \xi_v + \eta_v \xi_g) - (\xi_{g}\eta_{v}+\xi_{v}\eta_{g}))  \\
    &= \frac{d}{dt} (\eta_v,\eta_v) - [ (\xi_g,\xi_v) , (\eta_g, \eta_v) ]_\Join. \nonumber
\end{align}
\end{proposition}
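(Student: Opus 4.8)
The plan is to mimic the standard derivation of the variation formula in Euler–Poincaré theory (as in \cite{HoMaRa} or \cite{MandS}), working with the right-trivialized velocity $(\xi_g,\xi_v) = (\dot g,\dot v)\cdot(g,v)^{-1}$ and the right-trivialized variation $(\eta_g,\eta_v) = (\delta g,\delta v)\cdot(g,v)^{-1}$. The key point is that these are two commuting ``logarithmic derivatives'' of the same two-parameter family $(g,v)(t,\epsilon)$, one in $t$ and one in $\epsilon$, so the difference $\frac{d}{d\epsilon}(\xi_g,\xi_v) - \frac{d}{dt}(\eta_g,\eta_v)$ must be expressible purely in terms of $(\xi_g,\xi_v)$ and $(\eta_g,\eta_v)$ via the Lie bracket. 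I would first establish that this difference equals $\pm[(\xi_g,\xi_v),(\eta_g,\eta_v)]_\Join$ (the sign fixed by the right-trivialization convention), and then simply read off the component expressions using the explicit bracket formula \eqref{bracket} from Proposition \ref{prop:algebra}. Once the abstract identity $\delta\xi = \dot\eta - [\xi,\eta]_\Join$ is in hand, the second displayed line of \eqref{variationsright} is immediate, and the first line follows by expanding the bracket with $[\xi_g,\eta_g]_{\mathfrak g} = \ad_{\xi_g}\eta_g$ in the first slot and $(\xi_g\cdot\eta_v + \xi_v\cdot\eta_g) - (\eta_g\cdot\xi_v + \eta_v\cdot\xi_g)$ in the second.

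More concretely, I would argue as follows. Write $(\dot g,\dot v) = (\xi_g,\xi_v)\cdot(g,v)$ and $(\delta g,\delta v) = (\eta_g,\eta_v)\cdot(g,v)$, where $\cdot$ denotes the tangent-lifted right translation on $G\Join V$; equivalently, lift everything to curves $\epsilon\mapsto(g,v)(t,\epsilon)$ with $\partial_t(g,v) = (\xi_g,\xi_v)(g,v)$ and $\partial_\epsilon(g,v)|_{\epsilon=0} = (\eta_g,\eta_v)(g,v)$. Differentiating the first relation in $\epsilon$ and the second in $t$, using equality of mixed partials $\partial_\epsilon\partial_t(g,v) = \partial_t\partial_\epsilon(g,v)$, and transporting back to the identity (i.e.\ right-translating by $(g,v)^{-1}$), one obtains after cancellation of the ``second-order'' terms that
\[
    \delta\xi_g = \dot\eta_g - \ad_{\xi_g}\eta_g, \qquad
    \delta\xi_v = \dot\eta_v + (\eta_g\cdot\xi_v + \eta_v\cdot\xi_g) - (\xi_g\cdot\eta_v + \xi_v\cdot\eta_g),
\]
which is precisely the first line of \eqref{variationsright}. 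Recognizing the right-hand side's non-derivative part as $-[(\xi_g,\xi_v),(\eta_g,\eta_v)]_\Join$ via \eqref{bracket} gives the second line. The cleanest way to organize the bookkeeping is to note that the ``Join'' structure constants split as the sum of the left and right semi-direct product ones, so each of the two model computations in \cite{HoMaRa} can be invoked and added, consistent with the $\Join = \rtimes + \ltimes$ heuristic.

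The main obstacle I anticipate is purely one of careful bookkeeping rather than any conceptual difficulty: one must be scrupulous about the tangent-lifted group operation on $G\Join V$ — in particular, how $(\eta_g,\eta_v)\cdot(g,v)$ unpacks componentwise, since the $V$-component of the product \eqref{cs-dp} involves both actions — and about the sign conventions attached to right-trivialization (which is why $\ad_{\xi_g}$ appears with a minus sign rather than a plus). A secondary subtlety is verifying that the genuinely second-order cross terms (those quadratic in the group elements, coming from differentiating $g\cdot w$ and $v\cdot g$ twice) actually cancel between the $\partial_\epsilon\partial_t$ and $\partial_t\partial_\epsilon$ expansions; this cancellation is automatic from smoothness of the action and equality of mixed partials, but it should be exhibited explicitly in the $V$-slot to make the computation self-contained.
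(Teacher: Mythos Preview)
Your proposal is correct and follows essentially the same route as the paper. The paper's proof is even terser than yours: it simply invokes the general Lie-group identity $\delta\tilde\xi = \dot{\tilde\eta} - [\tilde\xi,\tilde\eta]$ for right-trivialized velocities (citing \cite[Theorem~13.5.3]{MandS} and \cite{BlKrMaRa}), sets $\tilde G = G\Join V$, and reads off the components via the bracket of Proposition~\ref{prop:algebra}---exactly the abstract-identity-then-specialize structure you outline, without the explicit mixed-partials bookkeeping you offer as a backup.
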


\begin{proof}
    For any Lie group, $\tilde{G}$, and any curve $\tilde{g}(t) \in \tilde{G}$, the variation of $\tilde{\xi}(t) := \dot{\tilde{g}}(t) \cdot \tilde{g}^{-1}(t)$ induced by the variation $\delta \tilde{g}(t)  = \tilde{\eta}(t) \cdot \tilde{g}(t)$ is $\delta \tilde{\xi} = \dot{\tilde{\eta}} - [ \tilde{\xi} , \tilde{\eta} ]$.  For matrix groups see \cite[Theorem 13.5.3]{MandS} and \cite{BlKrMaRa} for the general case.  If we set $\tilde{G} = G \Join  V$ and use the bracket derived in Proposition  \ref{prop:algebra} then the theorem follows.
\end{proof}

Now that we understand the relationship between variations of curves in $G \Join  V$ and the induced variations in $\mathfrak{g} \Join  V$ we can state the Euler-Poincar\'e theorem for centered semi-direct products.

\begin{theorem}\label{thm:ep}
Let $L: T(G \Join  V) \to \mathbb{R}$ be (right) $G \Join
V$-invariant, and let $\ell: \mathfrak{g} \Join  V \to \mathbb{R}$
be its reduced Lagrangian.  Let $(g,v)(t) \in G \Join  V$ and denote
the right trivialized velocity by $(\xi_g, \xi_v)(t) := (\dot{g} ,
\dot{v})(t) \cdot (g,v)(t)^{-1}$.  Then the following statements are
equivalent:
\begin{enumerate}[(i)]
\item Hamilton's principle holds.  That is,
\begin{equation}\label{action1}\delta\int_{t_0}^{t_1}L(g(t),\dot{g}(t),v(t))dt=0\end{equation} for
variations of $(g,v)(t)$ with fixed endpoints.

\item $(g, v)(t)$ satisfies the Euler-Lagrange equations for $L$.

\item The constrained variational principle
\begin{equation}\label{action2}
\delta\int_{t_0}^{t_1} \ell (\xi_{g}(t),\xi_v(t))dt=0\end{equation} holds on
$\mathfrak{g}\times V$ for variations of the form

\begin{equation}\label{eq:variations2L}
(\delta\xi_g, \delta\xi_v)=(\dot{\eta}_{g}-\ad_{\xi_{g}}\eta_{g}, \dot{\eta}_{v}+\eta_g\xi_v-\xi_{v}\eta_{g}+\eta_{v}\xi_{g}-\xi_{g}\eta_{v}),
\end{equation} where $(\eta_{g}, \eta_{v})(t)$ is an arbitrary curve in $\mathfrak{g}\Join  V$ which vanishes at the endpoints.

\item The Euler-Poincar\'e equations
\begin{eqnarray*}\label{EPeq1}
\frac{d}{dt}\left(\frac{\delta \ell}{\delta\xi_{g}}\right) + \ad_{\xi_g}^{*}\left(\frac{\delta \ell}{\delta\xi_g}\right) +\xi_{v} \diamondsuit \frac{\delta \ell}{\delta \xi_v}&=&0,\\
\frac{d}{dt}\left(\frac{\delta \ell}{\delta\xi_{v}}\right) + \xi_{g} \heartsuit \frac{\delta \ell}{\delta\xi_v}&=&0\nonumber
\end{eqnarray*}
hold on $\mathfrak{g}\Join  V$.
\end{enumerate}
\end{theorem}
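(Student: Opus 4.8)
The plan is to prove the cycle of equivalences $(i)\Leftrightarrow(ii)\Leftrightarrow(iii)\Leftrightarrow(iv)$, following the standard template for Euler--Poincar\'e reduction (see \cite[Ch.~13]{MandS}), but inserting the centered-semi-direct-product structures of Section \ref{sec:CSD} at the two places where the group structure actually enters: the form of the constrained variations and the integration by parts.

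First, $(i)\Leftrightarrow(ii)$ is simply the statement that Hamilton's principle on a manifold is equivalent to the Euler--Lagrange equations; nothing specific to $G\Join V$ is needed and this is the classical calculus of variations. Next, for $(i)\Leftrightarrow(iii)$ I would use right $G\Join V$-invariance of $L$ to get $\int_{t_0}^{t_1}L(g,\dot g,v,\dot v)\,dt=\int_{t_0}^{t_1}\ell(\xi_g,\xi_v)\,dt$ along any curve, where $(\xi_g,\xi_v)=(\dot g,\dot v)\cdot(g,v)^{-1}$. At each $t$ the right translation $(\eta_g,\eta_v)\mapsto(\delta g,\delta v):=(\eta_g,\eta_v)\cdot(g,v)$ is a linear isomorphism, so it gives a bijection between curves $(\eta_g,\eta_v)(t)$ in $\mathfrak{g}\Join V$ and variation fields along $(g,v)(t)$, and it carries curves vanishing at $t_0,t_1$ precisely to variations with fixed endpoints. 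By Proposition \ref{prop:variations}, such a variation induces exactly the variation \eqref{eq:variations2L} of $(\xi_g,\xi_v)$. Hence $\delta\int L\,dt=0$ over fixed-endpoint variations is literally the same assertion as $\delta\int\ell\,dt=0$ over the constrained variations \eqref{eq:variations2L}, which is $(iii)$.

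Finally, for $(iii)\Leftrightarrow(iv)$ I would expand $\delta\int_{t_0}^{t_1}\ell\,dt=\int_{t_0}^{t_1}\big(\langle\frac{\delta\ell}{\delta\xi_g},\delta\xi_g\rangle_{\mathfrak{g}}+\langle\frac{\delta\ell}{\delta\xi_v},\delta\xi_v\rangle_V\big)\,dt$, substitute \eqref{eq:variations2L}, and integrate the $\dot\eta_g$ and $\dot\eta_v$ terms by parts; the boundary terms drop because $(\eta_g,\eta_v)$ vanishes at $t_0,t_1$. The remaining algebraic terms are then rewritten using the definitions from Section \ref{sec:CSD}: the $-\ad_{\xi_g}\eta_g$ term becomes a pairing against $\eta_g$ via $\ad^*_{\xi_g}$; the $\eta_g\xi_v-\xi_v\eta_g$ term is absorbed into a pairing against $\eta_g$ via the diamond operator applied to $(\xi_v,\tfrac{\delta\ell}{\delta\xi_v})$; and the $\eta_v\xi_g-\xi_g\eta_v$ term is absorbed into a pairing against $\eta_v$ via the heart operator applied to $(\xi_g,\tfrac{\delta\ell}{\delta\xi_v})$. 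Collecting the coefficients of $\eta_g$ and of $\eta_v$ separately, and using that $(\eta_g,\eta_v)(t)$ is otherwise arbitrary, produces exactly the two Euler--Poincar\'e equations of $(iv)$; since every manipulation is reversible, this gives the equivalence, with $(g,v)(t)$ recovered from the right-trivialization relation defining $(\xi_g,\xi_v)$.

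I expect the only genuine obstacle to be sign and slot bookkeeping in the last step: one must track carefully which argument of $\diamondsuit$ and $\heartsuit$ each variable occupies, and the order of the factors inside each infinitesimal action, because the centered bracket \eqref{bracket} combines the left and right actions with opposite signs, so a careless ordering will flip a sign in one of the two equations. Everything else is routine once Proposition \ref{prop:variations} and the definitions of $\diamondsuit$ and $\heartsuit$ are in hand.
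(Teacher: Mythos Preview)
Your proposal is correct and follows essentially the same route as the paper: the equivalence $(i)\Leftrightarrow(ii)$ is classical, $(i)\Leftrightarrow(iii)$ is obtained from invariance of $L$ together with Proposition~\ref{prop:variations} for the bijection between variations, and $(iii)\Leftrightarrow(iv)$ is the direct computation (substitute \eqref{eq:variations2L}, integrate by parts, and rewrite using $\ad^*$, $\diamondsuit$, $\heartsuit$). Apart from a harmless reordering of which equivalence is proved first, your outline matches the paper's proof step for step.
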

\begin{proof}
 The equivalence {\it (i)} and {\it (ii) } holds for any
configuration manifold and so, in particular it holds in this case.

Next we show the equivalence {\it (iii)} and {\it (iv)}. We
compute the variations of the action integral to be
\begin{align*}
\delta \int_{t_0}^{t_1}\ell(\xi_{g}(t),\xi_v(t))dt =& \int_{t_0}^{t_1}\Big{\langle}\frac{\delta \ell}{\delta\xi_{g}},\delta\xi_g\Big{\rangle}+\Big{\langle}\frac{\delta \ell}{\delta\xi_{v}},\delta \xi_v\Big{\rangle}dt\\
=&\int_{t_0}^{t_1}\Big{\langle}\frac{\delta \ell}{\delta\xi_{g}},\dot{\eta}_{g}-\ad_{\xi_{g}}\eta_{g}\Big{\rangle}+\Big{\langle}\frac{\delta \ell}{\delta \xi_{v}},\dot{\eta}_{v}+\eta_{g}\xi_{v}-\xi_{v}\eta_{g}+\eta_{v}\xi_{g}-\xi_{g}\eta_{v}\Big{\rangle}dt\\
\intertext{ and applying integration by parts and equation \eqref{eq:ad_star} we find}\\
=&\int_{t_{0}}^{t_1}\blangle-\frac{d}{dt}\left(\frac{\delta \ell}{\delta \xi_{g}}\right) -\ad_{\xi_{g}}^{*}\left(\frac{\delta \ell}{\delta \xi_{g}}\right) , \eta_{g}\brangle
+\blangle-\frac{d}{dt}\frac{\delta\ell}{\delta\xi_{v}} , \eta_{v}\brangle\\
&+\blangle\frac{\delta \ell}{\delta \xi_{v}} , \eta_{g}\xi_{v} - \xi_{v}\eta_g\brangle
   +\blangle\frac{\delta \ell}{\delta\xi_{v}} , \eta_{v}\xi_{g} - \xi_g\eta_v\brangle dt \\
&+ \blangle\frac{\delta\ell}{\delta\xi_{g}} , \eta_{g}\brangle\Big{|}_{t_0}^{t_1}+\blangle\frac{\delta \ell}{\delta \xi_{v}} , \eta_{v}\brangle\Big{|}_{t_0}^{t_1}\\
=&\int_{t_0}^{t_1}\Big{\langle}-\frac{d}{dt}\left(\frac{\delta l}{\delta\xi_g}\right)-\ad_{\xi_g}^{*}\left(\frac{\delta \ell}{\delta\xi_g}\right)-\left(\xi_{v}\diamondsuit \frac{\delta \ell}{\delta \xi_{v}}\right), \eta_g\Big{\rangle}\\
&+\Big{\langle}-\frac{d}{dt}\left(\frac{\delta \ell}{\delta\xi_v}\right)- \xi_{g}\heartsuit \frac{\delta \ell}{\delta\xi_v}, \eta_v\Big{\rangle}dt.
\end{align*}
By noting that $(\eta_g,\eta_v)(t)$ is arbitrary on the interior of the integration domain, the result follows.

Finally, we show that {\it (i)} and {\it(iii)} are equivalent. The
$G-$invariance of $L$ implies that the
integrands in \eqref{action1} and \eqref{action2} are equal.
However, by Proposition \ref{prop:variations} all the variations of $(g,v)(t)$ with fixed endpoints induce, and are induced by,
variations $(\delta\xi_g,\delta\xi_v)(t)\in\mathfrak{g}\Join  V$
of the form given in equation \eqref{eq:variations2L}.
Conversely if {\it (i)} holds with respect to arbitrary variations $(\delta g, \delta v)$, we define
\[
    (\eta_{g},\eta_{v})(t) =  (\delta g , \delta v)  \cdot (g,v)^{-1},
\]
to produce the variation of $(\xi_g, \xi_v)$ given in equation \eqref{eq:variations2L}.
\end{proof}

\begin{remark}
There is a left invariant version of theorem \eqref{thm:ep} in which
$(\xi_{g} , \xi_v) :=(g,v)^{-1} \cdot (\dot{g},\dot{v})$ and $L$ is left $G \Join  V$-invariant. In
this case the Euler-Poincar\'e equations take the form
\begin{align*}
\frac{d}{dt}\left(\frac{\delta \ell}{\delta\xi_{g}}\right) - \ad_{\xi_g}^{*}\left(\frac{\delta \ell}{\delta\xi_g}\right) - \xi_{v} \diamondsuit \frac{\delta \ell}{\delta \xi_v}&=0,\\
\frac{d}{dt}\left(\frac{\delta \ell}{\delta\xi_{v}}\right) - \xi_{g} \heartsuit \frac{\delta \ell}{\delta\xi_v}&=0.
\end{align*}
\end{remark}

\begin{remark}
There is a version of semi-direct product mechanics wherein the vector-space $V$ is a set of \emph{advected parameters} as in \cite{HoMaRa}.  In this case we impose the holonomic constraint
\[
    \dot{v} = \dot{g} \cdot v + v \cdot \dot{g}
\]
and the set of admissible variations in $\mathfrak{g} \Join V$ become
\begin{align*}
    \delta \xi_g = \dot{\eta}_g - [\xi_g , \eta_g] \quad , \quad \delta v = \eta_g \cdot v + v \cdot \eta_g.
\end{align*}
If we do this, the $\heartsuit$-term is removed and $\frac{ \delta \ell}{\delta v }$ equation is replaced with a holonomic constraint.  In particular we find that
\begin{align*}
\frac{d}{dt}\left(\frac{\delta \ell}{\delta\xi_{g}}\right) \pm \ad_{\xi_g}^{*}\left(\frac{\delta \ell}{\delta\xi_g}\right) \pm \xi_{v} \diamondsuit \frac{\delta \ell}{\delta \xi_v}&=0 \\
\frac{d v}{dt} = \xi_g \cdot v + v \cdot \xi_g.
\end{align*}
where we use a plus sign for right trivialization and a minus sign for left trivialization.
\end{remark}

\section{Examples} \label{sec:examples}
In this section we will present two examples of Euler-Poincar\'{e} equations on centered semidirect products.  This first is a toy example designed to illustrate how computations of the diamond and heart operators can be done in practice.  The second example is concerns second order jets as described in subsection \ref{sec:jets}.

\subsection{A toy example}
  Consider the group $\GL(n)$ and let $\Mat(n)$ denote the vector space of $n \times n$ real matrices.  Noting that $\GL(n)$ acts on $\Mat(n)$ by left and right multiplication, we can define the composition law on the Lie group $GL(n) \Join  \Mat(n)$  by:
  \[
    (A,v)\cdot (B,w)=(AB,Aw+vB).
\]
Moreover, we can identify $\mathfrak{gl}^*(n)$ with
$\mathfrak{gl}(n)$ and $\Mat(n)^*$ with $\Mat(n)$ by the matrix
trace pairing $\langle A , B \rangle=\trace( A^{T}  B)$.  This
allows us to calculate the heart operator
$\heartsuit:\mathfrak{gl}(n)\times \Mat(n)^* \to \Mat(n)$ as

\begin{eqnarray*}
\langle A\heartsuit w,v\rangle&=&\langle w,A\cdot v-v \cdot A\rangle\\
&=&\trace \left(w^{T}(A\cdot v-v \cdot A)\right)\\
&=&\trace \left(w^{T}\cdot (A\cdot v)-w^{T}(v\cdot A)\right)\\
&=&\trace \left((w^{T}\cdot A)v-(A\cdot w^{T})\cdot v\right)\\
&=&\trace \left((w^{T}\cdot A-A\cdot w^{T})\cdot v\right)\\
&=&\trace \left((A^{T}w-w\cdot A^{T})^{T}\cdot v\right)\\
&=&\langle A^{T}w-w A^{T}, v\rangle
\end{eqnarray*}

Therefore,
\[
  A\heartsuit w = A^{T}w-w A^{T}.
\]

By a similar calculation, diamond operator is found to be
\[
    v \diamondsuit w = v^T w - w v^T,
\]
and the coadjoint action on $\GL(n)$ is given by
\[
    \ad_{A}^{*}(\alpha_A )= A^{T} \cdot \alpha_A-\alpha_A \cdot A^{T}.
\]

Now, we have all the ingredients to write the Euler-Poincar\'e
equations. Given a reduced Lagrangian $\ell: \mathfrak{gl}(n) \Join \Mat(n) \to \mathbb{R}$ we may denote the reduced momenta by
\[
    \mu = \frac{ \delta \ell}{\delta \xi} , \quad \gamma = \frac{ \delta \ell}{\delta v}.
\]
where $(\xi, v) \in \mathfrak{gl}(n) \Join \Mat(n)$.  The Euler-Poincar\'{e} equations can be written as
\begin{eqnarray*}
\dot{\mu}&=&(\xi^{T}\mu-\mu\xi^{T})+v^{T}\gamma-\gamma v^{T}\\
\dot{\gamma}&=&\xi^{T}\gamma-\gamma \xi^{T}.
\end{eqnarray*}

\subsection{An isotropy group of a second order jet groupoid}
 In proposition \ref{prop:jets} we illustrated how the second order jets of diffeomorphisms of the stabilizer group of a point $x \in M$ is identifiable with a centered semidirect product.
 In particular, if $\dim(M) = n$ we can consider the group $\GL(n) \Join \S^1_2$, where $\S^1_2$ is the set of $(1,2)$-tensors which are symmetric in the covariant indices.
 For the moment we shall consider the larger space of all $(1,2)$-tensors denoted $\T^1_2$.
 If we let $\e_1, \dots, \e_n \in \mathbb{R}^n$ be a basis with dual basis $\e^1, \dots , \e^n \in (\mathbb{R}^n)^*$ we can write an arbitrary element of $\T^1_2$ as
 \[
    T =  T^i_{jk} \e_i \otimes \e^j \otimes \e^k .
 \]
 The left action of $\GL(n)$ on $\T^1_2$ is
 \[
    g \cdot T := T^{i}_{jk} (g \cdot \e_i) \otimes \e^j \otimes \e^k \equiv T^{i}_{jk} g^{l}_i \e_l \otimes  \e^j \otimes \e^k
 \]
 while the right action is
 \[
    T \cdot g := T^{i}_{jk} \e_i \otimes ( g^T \cdot \e^j ) \otimes (g^T \cdot \e^k).
 \]
 Clearly these actions commute, and so we may form the centered semidirect product Lie group $\GL(n) \Join \T^1_2$.

 Let us now focus on the Lie algebra.  The Lie algebra $\mathfrak{gl}(n)$ is equivalent to $\T^1_1$ and the Lie bracket is then given in the bases $\e_i \otimes \e^j$ by
 \[
    [\xi , \eta] = (\xi^i_k \eta^k_j - \eta^i_k \xi^k_j) \e_i \otimes \e^j,
 \]
 where $\xi = \xi^i_j \e_i \otimes \e^j$ and $\eta = \eta^i_j \e_i \otimes \e^j$.
 We can use the dual basis $\e^i \otimes \e_j$ to see that the coadjoint action of $\xi$ on $\mu = \mu_i^j \e^i \otimes \e_j$ is given by
 \[
    \ad_\xi^* \mu = (\mu^j_k \xi^k_i - \mu^k_i \xi^j_k) \e^i \otimes \e_j.
 \]
  By differentiation we see that the infinitesimal left and right actions of $\mathfrak{gl}(n)$ on $\T^1_2$ are given by
 \begin{align*}
    \xi \cdot T &= T^{i}_{jk} \xi^{l}_i \e_l \otimes  \e^j \otimes \e^k \\
    T \cdot \xi &= T^{i}_{lk} \left[ \e_i \otimes ( \xi^j_l \cdot \e^l ) \otimes \e^k + \e_i \otimes  \e^j \otimes (\xi^k_l \cdot \e^l) \right] \\
        &= (T_{lk}^{i} \xi^l_j + T_{jl}^i \xi^{l}_k ) \e_i \otimes \e^j \otimes \e^k.
 \end{align*}
If we choose an arbitrary element $\alpha \in (\T^1_2)^* \equiv \T_1^2$ given by
\[
    \alpha = \alpha_i^{jk} \e^i \otimes \e_j \otimes \e_k
\]
we find that
\begin{align*}
    \langle \alpha , \xi \cdot T \rangle &= (\alpha_l^{jk} \xi^l_i) T_{jk}^i = (\alpha^{lk}_{i} T^{j}_{lk}) \xi^{i}_j\\
    \langle \alpha , T \cdot \xi \rangle &= ( \alpha^{lk}_{i} \xi^j_l + \alpha^{jl}_{i} \xi^{k}_{l}) T_{jk}^i = ( \alpha^{jk}_l T^l_{ik} + \alpha^{kj}_l T^l_{ki}) \xi^{i}_{j}.
\end{align*}
Therefore the heart operator is given by
\[
    \xi \heartsuit \alpha = ( \xi^l_i \alpha_l^{jk} -  \alpha^{lk}_{i} \xi^j_l - \alpha^{jl}_{i} \xi^{k}_{l}) \e^i \otimes \e_j \otimes \e_k
\]
and the diamond operator is
\[
    \alpha \diamondsuit T = ( \alpha^{jk}_l T^l_{ik} + \alpha^{kj}_l T^l_{ki} - \alpha^{lk}_{i} T^{j}_{lk} ) \e^i \otimes \e_j.
\]
Given a reduced Lagrangian $\ell: \mathfrak{gl}(n) \Join \T^1_2 \to \mathbb{R}$ we can denote $\mu = \frac{ \delta \ell}{\delta \xi}$ and $\gamma = \frac{ \delta \ell}{\delta T}$.
In terms of the basis $\e^i \otimes \e_j$ and $\e_i \otimes \e^j \otimes \e^k$ we may write the (right) Euler-Poincar\'e equations as:
\begin{align*}
\dot{\mu}^j_i  &=  \alpha^{lk}_{i} T^{j}_{lk} + \mu^j_k \xi^k_i - \mu_i^k \xi^j_k - \alpha^{jk}_l T^l_{ik} - \alpha^{kj}_l T^l_{ki}  \\
\dot{T}^i_{jk} &= \xi^l_i \alpha_l^{jk} -  \alpha^{lk}_{i} \xi^j_l - \alpha^{jl}_{i} \xi^{k}_{l}.
\end{align*}
  By restricting $\T^1_2$ to the subspace $\S^1_2$, we can obtain a Lie group which models second order jets of diffeomorphisms as demonstrated in proposition \ref{prop:2jets}. This example provides a first step towards the creation of higher-order, spatially accurate particle methods \cite[section 4]{JaRaDe2011}.  Moreover, the data of second order jets is necessary for the advection of quantities seen in complex fluids in which the advected parameters depend on gradients of the flow \cite{GayBalmaz2009,Holm2002}.  Therefore, the structures described here may prove useful in the construction of particle-based integrators for complex fluids as well.

\section{Conclusion}
In this paper, we have presented a variant of traditional semi-direct products, dubbed centered semi-direct products, and we have illustrated the associated Euler-Poincar\'{e} theory.  The diamond operator, the group multiplication, and the Lie bracket can all be seen as sums of the associated concepts for left and right semi-direct products.  As a result, the Euler-Poincar\'e theory associated with centered semi-direct products can also be seen as a sum of the left and right invariant Euler-Poincar\'{e} theories for semi-direct products.  Presently, many of these constructions remain fairly theoretical.  However, an isotropy group of the second order jet groupoid can be seen as a centered semi-direct product.  This has potential applications in simulation of complex fluids. We hope this paper provides a stepping stone towards realizing this application.

\bibliographystyle{amsplain}
\bibliography{CoJaMarsden_Memorial}

\end{document}